\newtheorem{lem}{Lemma}
\def\tr{\operatorname{tr}}
\def\ket#1{|#1\rangle}
\def\ketbra#1#2{|#1\rangle\langle #2 |}
\def\c2{\ensuremath{C^{(2)}}}
\begin{document}

\title{Entanglement Detection with Rotationally Covariant Measurements -- \\
From Compton Scattering to Lemonade\\
}
\date{\today}
\author{Marlene Funck}
\affiliation{Institut f\"{u}r Theoretische Physik, Leibniz Universit\"{a}t Hannover, Germany}
\author{Ilija Funk}
\affiliation{Institut f\"{u}r Quantenoptik, Leibniz Universit\"{a}t Hannover, Germany}
\author{Tizian Schmidt}
\affiliation{Institut f\"{u}r Festk\"{o}rperphysik, Leibniz Universit\"{a}t Hannover, Germany}
\author{René Schwonnek}
\affiliation{Institut f\"{u}r Theoretische Physik, Leibniz Universit\"{a}t Hannover, Germany}
\begin{abstract}
The accurate and efficient detection of quantum entanglement remains a central challenge in quantum information science. In this work, we study the detection of entanglement of polarized photons for measurement devices that are solely specified by rotational symmetry. 
We derive explicit positive operator valued measures (POVMs) showing that from a quantum information perspective any such setting is classified by one real measurable parameter $r$. In Particular, we give a POVM formulation of the Klein--Nishina formula for Compton scattering of polarized photons.
We provide an SDP-based entanglement certification method that operates on the full measured statistics and gives tight bounds, also considering semi-device independent scenarios.
Furthermore, we show that, while Bell violations are impossible with rotationally covariant measurements, EPR steering can still be certified under one-sided symmetry constraints. Finally, we present a rotationally covariant showcase experiment, analyzing the scattering of polarized optical light in a selection of soft drinks. Our results suggest that lemonade-based detectors are suitable for entanglement detection.

\end{abstract}
\maketitle
\section{Introduction}
Entanglement is the central phenomenon that separates quantum theory from a classical description of our world. Beyond this fundamental significance, it is also an essential resource for quantum technologies. This not only includes widely popular fields like quantum computing and cryptography, but also a less often highlighted medical application, positron emission tomography (PET) \cite{Caradonna_2019, Caradonna2024stokes_rep, Caradonna2025, Hiesmayr_Moskal_2019, Watts2021, Abdurashitov2022, Zaidi2004, Romanchek2023, Romanchek2024, Kim2023, Eslami2024, Moskal2025, mcnamara2014}, which is one of the possible practical applications of the theory developed in this work.

Tackling the task of detecting and quantifying entanglement in a real-world experiment, one faces a fundamental challenge of theoretical physics: finding a model that is mathematically manageable and enables nontrivial conclusions, while not diverging (at least not significantly) from reality. 
For entanglement mediated by polarized photons, those models span a wide spectrum, ranging from minimal assumptions, like device-independent tests of Bell inequalities \cite{CHSH1969}, to assuming idealized pure states and projective measurements \cite{cohen2019quantum}. While the first comes with very demanding technological requirements for any positive result \cite{Giustina2013,guhne2004detecting,Giustina2015,Shalm2015,Hensen2015}, the latter will almost certainly fail to describe realistic experiments with unavoidable imperfections. 

In this work, we develop and study a model at a useful middle ground: detecting entanglement of two polarized photons using measurement devices specified solely by rotational covariance, which can be found (or created) in a variety of settings (see Fig.~\ref{fig:rotation_covariant}). 
\begin{figure}[h!]
    \centering \includegraphics[width=0.39\textwidth]{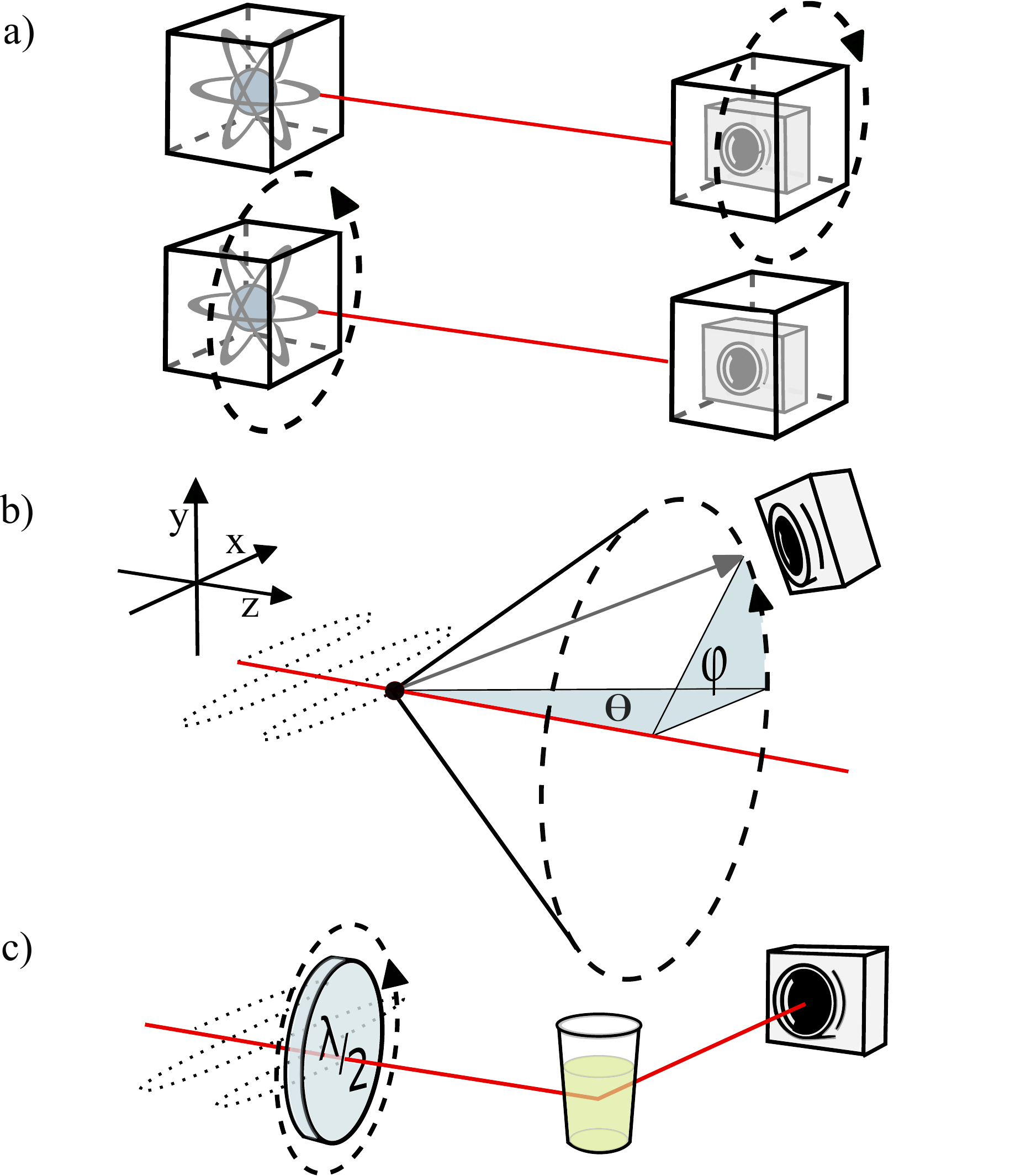}
    \caption{a) Rotationally covariant measurement. Measurement data is $2\pi$-invariant. Rotating the preparation system (i.e.~process before detection) or the detector are equivalent in description. b) Compton scattering process. Polarized photons scatter from electrons into angles, $\theta$ and $\varphi$. It is rotationally covariant w.r.t. $\varphi$. c) The lemonade experiment. The polarization of photons is rotated by a half-wave plate. Photons scattered in the lemonade are detected at a fixed angle.}
    \label{fig:rotation_covariant}
\end{figure}
The primordial motivation of this work stems from Compton scattering experiments, used to detect correlations in $\gamma$-quanta. This mechanism  was used in the very first quantum correlation experiments in the late 40's \cite{Wu1950}, is today used to enhance PET scanners \cite{Caradonna_2019, Caradonna2024stokes_rep, Caradonna2025, Hiesmayr_Moskal_2019, Watts2021, Abdurashitov2022, Zaidi2004, Romanchek2023, Romanchek2024, Kim2023, Eslami2024, Moskal2025}, and sparked a recent debate among theorists regarding the distinguishability of entangled and separable photon states in Compton scattering setups \cite{Hiesmayr_Moskal_2019, Caradonna_2019, Ivashkin2023}. This is hopefully resolved now, as this paper shows that they can indeed be distinguished.
Beyond this, rotational covariance can also be found in the mechanism that underpins the recent verification of entanglement at the LHC  \cite{Afik2021entanglementtopquarks}. 

Since highly radioactive material or sufficiently large hadron colliders are rarely found in a standard laboratory for quantum optics, we introduce an easy-to-perform  experiment analyzing scattering of photons with optical wavelengths (see Fig.~\ref{fig:rotation_covariant}c and Fig.~\ref{fig:setup}). We measure the polarization dependent scattering distributions of a selection of soft drinks, and analyze them as rotational covariant detectors in our framework. 
Our results suggest that apple lemonade is suitable for detecting entanglement. In fact, the detection contrast obtained using lemonade-based detectors exceeds the known limit for Compton scattering of electron--positron annihilation photons.
A fully conclusive experiment, however, demands entangled photon pairs and high sample sizes, which is why we leave it for a future publication.

Beyond serving as a neat showcase and potential student experiment, studying detectors based on soft drinks illustrates a conceptually central aspect of our work: It is unrealistic to find an exact model for a glass of lemonade, reflecting all relevant specifics and imperfections. Nevertheless, by reducing our description to a macroscopically implemented rotational symmetry, we are able to make a comprehensive analysis of the setup.
\section{Background and Setting}
Historically, the investigation of entanglement originates in the well-known Gedankenexperiment of Einstein, Podolski, Rosen \cite{EPR1935} and Schrödinger \cite{Schroedinger1935}, which predicted that quantum theory allows for correlations that cannot be explained classically. Already in 1950, Wu et al.~\cite{Wu1950} performed an experiment on Compton scattering of annihilation photons, which was recognized to be a proof of these correlations by Bohm and Aharonov \cite{Bohm_Aharonov_1957} in 1957. Although their findings predated the more widely known Bell-type (CHSH: \cite{CHSH1969}) photon experiments by over a decade, they are rarely discussed in textbooks. One reason might be that the quantum information description of their setting is more subtle. In contrast to the CHSH setting, the Compton scattering measurement cannot be described by projective measurements typically used for discussing correlation tests. In fact, they are a prototype example of a joint measurement, where the traditional textbook postulate of a projection onto eigenstates fails.
However, the language of quantum information has been developed further over the past decades \cite{Werner1989EPRcorrelations, ludwig_2012, kraus1983} and provides the right tool for describing this setting: Quantum states are modeled by density operators $\rho$, which are positive unit trace operators acting on a Hilbert space $\mathcal{H}$. Measurement devices with outcomes in a measurable outcome set $\Omega$ are modeled by positive operator-valued measures (POVMs), which assign a positive operator $\mathsf{M}[\omega]$ to every measurable subset $\omega\subseteq\Omega$ with normalization $\mathsf{M}[\Omega]=1$ such that $\omega \mapsto \tr[\rho \, \mathsf{M}[\omega]]$ is a probability measure for every state $\rho$ \cite{heinosaari2011}. 

In the case of a bipartite setting, the Hilbert space factorizes: $\mathcal{H}=\mathcal{H}_\text{A}\otimes\mathcal{H}_\text{B}$. Local measurements $M^\text{AB}$ are  modeled  by a product of POVMs with outcomes $\Omega_\text{AB}= \Omega_\text{A} \times\Omega_\text{B}$ and a decomposition $M^\text{AB}=M^\text{A}\otimes M^\text{B}$.  A state $\rho_\text{AB}$ is called separable, if it is a convex combination of product states \cite{Werner1989_Wernerstate}, i.e. 
$\rho_\text{AB}=\sum_i p_i \rho_\text{A}^i\otimes\rho_\text{B}^i$,
and entangled otherwise. 
The task of entanglement detection  concerns the following prototypical question: 
\begin{center}
Given measurement data $\Omega_\text{meas}$ observed on an unknown state $\rho_\text{AB}$. Can we find a separable state consistent with our observation -- or -- can we conclude that $\rho_\text{AB}$ is entangled? 
\end{center}
Recently, there has been renewed interest in analyzing entanglement in Compton scattering experiments, driven by its applications in medical imaging with positron emission tomography (PET) scanners \cite{Caradonna_2019, Caradonna2024stokes_rep, Caradonna2025, Hiesmayr_Moskal_2019, Watts2021, Abdurashitov2022, Zaidi2004, Romanchek2023, Romanchek2024, Kim2023, Eslami2024, Moskal2025, romanchek2024investigation}. 
Hiesmayr et al.~and Caradonna et al.~\cite{Caradonna_2019, Caradonna2024stokes_rep, Caradonna2025, Hiesmayr_Moskal_2019} have proposed ways to describe the process from a quantum information perspective by reformulating the Klein--Nishina formula \cite{KleinNishina}. However, a precise description in quantum information language that rigorously distinguishes between the set of states (i.e., the preparation procedure) and the measurement devices (which are described by POVMs) was still lacking. This has caused confusion about the detectability of entanglement in previous works \cite{Caradonna_2019, Hiesmayr_Moskal_2019, Ivashkin2023}.
In this work, we derive and study the explicit POVMs for any setting with a rotational symmetry (in particular Compton scattering) and present an entanglement detection scheme that can be easily applied to them.

Our approach is based on two fundamental assumptions about the system and the measurement device:\\
(i)  \hspace*{0,02mm} The quantum state of the individual system can be \\
\hspace*{4,4mm} described as a qubit. \\
(ii) The system and measurement device are rotationally \\
\hspace*{4,5mm} covariant (Fig.~\ref{fig:rotation_covariant}) with respect to  U(1).\\
In real-world experiments, assumption (i) is justified whenever the carrier of quantum information is a two-level system, transforming under the symmetry group SU(2), or, in particular, the polarization of a single photon. When implementing this, e.g.~with an SPDC source, this assumption translates as excluding the statistical influence of potential multi-photon events. This is reasonable as long as pump laser powers are kept sufficiently small so that the observed statistics are dominated by single-photon events. Assumption (ii) will be specified in chapter \ref{sec2}.


In section \ref{sec2}, we present the POVM description of a general rotationally covariant (RoC) measurement. Next, in section \ref{sec3}, we give a clear description of an entanglement detection procedure, including error handling. Section \ref{sec4} elaborates on the semi-device independence of this approach, while section \ref{sec5} explains why this type of arrangement is unsuitable for Bell experiments, but it can demonstrate steering. In section \ref{sec6}, we use our formalism to recover well established descriptions of Compton scattering of polarized photons, like the Klein--Nishina Formula \cite{KleinNishina}. Finally, in section \ref{sec7}, we introduce a simple experimental setup, described by our framework, using a laser, a half-wave plate, and a glass of lemonade.

\section{Rotationally covariant Measurements}
\label{sec2}
We make two basic assumptions: (i)~that the preparation can be faithfully described as a qubit, and (ii)~that the measurement is rotationally covariant (RoC). Considering experiments such as the Compton scattering (Fig.~\ref{fig:rotation_covariant}b), it is clear that the only action, an experimenter takes to switch between different observables, is to rotate some part of the measurement apparatus, for example, a detector (or a whole ring of them). Rotating by any angle $\varphi$ will only cause the measurement data to shift cyclically, without actually changing the statistics. 

Let $\{U_\varphi\}_{\varphi} \subset$ SU(2) be the unitary action of the group of spacial rotations in SO(3) of an angle $\varphi$ acting on the state space $\mathds{C}^2$, i.e. 
\begin{equation}
    U_\varphi = e^{-i\varphi\sigma_z}, \;\;\;\; \varphi \in [0,\pi),
\end{equation}
where we fixed the rotational axis to be the z-axis. $\sigma_z$ is the pauli matrix. Note that for a spin-1/2 system, one would get a factor of $\frac{1}{2}$ in the exponent and have $\varphi \in [0,2\pi)$ instead. 

We consider measurements, with outcome set $\Omega=[0,\pi)$ and a POVM given by $\mathsf{M}[\omega] = \int_\omega M_\varphi d\varphi$ with an operator valued density $M_\varphi$ yielding the probability density of a detection event at $\varphi$.
We call a measurement with such a POVM \textit{rotationally covariant}, if it fulfills
\begin{equation}
\label{POVM}
    M_\varphi = \frac{1}{\pi} \; U^*_{\varphi}M_0U_{\varphi}.
\end{equation}
With this, the probability density of a detection event occurring at a differential angle segment $d\varphi$ around $\varphi$ for a given state $\rho$ is
\begin{equation}
    \mathds{P}_\rho(d\varphi) = \frac{1}{\pi} \tr[\rho \; U_\varphi^*M_0U_\varphi]d\varphi.
    \label{RoCprobability}
\end{equation}
The probability for an angle interval $[\varphi_1,\varphi_2]$ is obtained by integrating \eqref{RoCprobability} over that interval. By requiring that the total probability is normalized, we obtain the following form of $M_0$ (details in \ref{DerivationM0}):
\begin{equation}
    M_0 = \mathds{1} + r\sigma_x \; \; \; \;\;  r\in[0,1],
    \label{M_0}
\end{equation}
where $r$ is a setup-dependent parameter, which we call the `detector contrast'. A `perfect' detector would be obtained for $r=1$, while $r=0$ describes a detector measuring nothing but noise. Equations \eqref{POVM} and \eqref{M_0} show that any RoC measurement on a qubit is fully characterized by this one parameter $r$. This justifies the notation $M_\varphi^r$ (and $M_0^r$) for our POVM elements. As we will see, in the case of Compton polarimeters, $r$ is the `analyzing power'.

Given an RoC device, the detector contrast $r$ can be measured by using a source that emits a linearly polarized state (details in \ref{measuring_r}).

\section{Entanglement Detection from Measured Data}
\label{sec3}
Generally, entanglement detection is an NP-hard problem \cite{GURVITS2004448}. For two qubits, however, our formalism allows us to provide a straightforward method of determining whether the measured data corresponds to an entangled state.
We do this by reformulating the task as a semidefinite program (SDP) \cite{boyd2004} which is easily solved by, e.g., the MOSEK SDP solver with exact error estimations. Our reasons for choosing a numerical method are twofold. Firstly, the measurement record is structurally complex, and general binning over arbitrary angle segments prevents a compact analytical description. Secondly, we must incorporate optimization over finite sample errors, which SDPs handle naturally via convex confidence sets.

Assume that we have a bipartite system $\rho\in\mathcal{H}_\text{A}\otimes\mathcal{H}_\text{B}$ with two parties, Alice and Bob, each holding a qubit and an RoC measurement device with a finite resolution. This is defined by a finite angle partition $\{\Delta\varphi_i\}_{i\in I}$ ($I$ being some index set) such that $\cup_{i\in I}\Delta\varphi_i = [0,\pi)$. Then Alice's POVM is the set of operators
\begin{equation}
    A_i = \int_{\Delta\varphi_i}\frac{1}{\pi}U_\varphi^*(\mathds{1} + r_\text{A} \:\sigma_x)U_\varphi \;d\varphi, \;\;\;\; i\in I
    \label{povm_elmt}
\end{equation}
and Bob's $B_i$ respectively, where the detector contrasts $(r_\text{A}, r_\text{B})$ fully determine their measurement devices. Now the measurement statistic $\{p_{ij}\}_{i,j\in I}$ for Alice detecting an event in $\Delta\varphi_i$ and Bob simultaneously in $\Delta\varphi_j$, reflects the probabilities $P_{ij} = \tr[\rho \, (A_i\otimes B_i)]$ for a given state $\rho$.

We can make a statement about entanglement by solving a minimization problem: Let $\mathcal{D}$ be the set of density matrices for a two-qubit system. Then the state is entangled, if
\begin{equation}
    \inf \bigl\{\mathcal{N}(\tilde\rho)\, | \, \tilde\rho \in \mathcal{D}, \,\tr[\tilde\rho \, (A_i \otimes B_j)] = p_{ij} \, \forall i,j\bigr\} > 0.
    \label{min}
\end{equation}
Here, we use the negativity $\mathcal{N}(\rho)=\frac{\|\rho^\Gamma\|_1-1}{2}\geq0$ \cite{vidalWerner2002computablemeasureofentanglement}, where $\rho^\Gamma$ denotes the partial transpose of $\rho$ on Alice's part of the system and $\|\cdot\|_1$ denotes the trace norm. It is a measure of entanglement with $\mathcal{N}(\rho)>0$, if $\rho$ is entangled. This minimization can be formulated as an SDP (derivation in \ref{derive_sdp}):
\begin{equation}
    \inf\{\tr[T]-1 |\, \tilde\rho^\Gamma \leq T\ge0, \, \tr[\tilde\rho(A_i \otimes B_j)]=p_{ij} \,\forall ij\}, 
    \label{SDP}
\end{equation}
where the minimization runs over two matrices: $T$ and the density matrices $\tilde\rho$ compatible with the probabilities. We can detect entanglement by solving this SDP. Note, that for any  $r_\text{A}, r_\text{B}>0$, entanglement detection is equally possible. However, the statistical deviation of finite samples will scale with $r_\text{A}, r_\text{B}$.   

When dealing with experimental data, we can take take into account errors by assuming that the measurement data is independently and identically distributed (I.I.D.) and using Hoeffding's inequality \cite{hoeffding1994probability}. Choosing a confidence level $c\in[0,1]$, we can estimate the probability $\mathds{P}$ that the deviation $\varepsilon_{ij}$ of the experimental probabilities $p_{ij}$ from the actual probabilities $P_{ij}$ is smaller than a $c$-dependent threshold (details in \ref{error_est_derivation}):
\begin{equation}
    \mathds{P}\Bigl(\varepsilon_{ij} < \sqrt{\frac{n}{2}\ln \frac{2}{1-c}} \: \Bigr) > c,
    \label{error}
\end{equation}
where $n$ is the number of measurements. Both the negativity $\mathcal{N}$ and the map $\rho \rightarrow \tr[\rho \, (A_i\otimes B_j)]$ are continuous with respect to the trace norm, so the maximum error in the negativity's minimum will be small for small $\varepsilon_{ij}$.
\section{Semi-Device Independence}
\label{sec4}
It is possible to determine the detector contrast, if we are provided with perfectly prepared states. But what if we analyze top quarks or other systems that cannot be reliably prepared? Cans we detect entanglement if the state and the RoC devices (i.e.~the detector contrasts $r_\text{A}$, $r_\text{B}$) are unknown? Indeed, we can. Our approach can detect entangled states without any knowledge of the devices involved. While it may not detect all entangled states, the detections it makes will be accurate.

The procedure is as follows: First, we have obtained the measurement statistics $\{p_{ij}\}_{i,j}$, where $p_{ij} = \tr[\rho \, (A^{r_\text{A}}_i\otimes B^{r_\text{B}}_i)]$ for an unknown sample state $\rho$ and unknown contrasts $(r_\text{A}, r_\text{B})$ of our sample devices. We now try to replicate these statistics using two arbitrary hypothetical RoC measurement devices, defined by $(r_\text{A}^{\text{hyp}}, r_\text{B}^{\text{hyp}})$ and every hypothetical state $\rho^\text{hyp}$. Here, we can distinguish the following cases:
\\[2mm]
1. $\{p_{ij}\}$ can only be replicated with entangled $\rho^\text{hyp}$
\\
2. $\{p_{ij}\}$ can be replicated with some non-entangled $\rho^\text{hyp}$
\\
3. $\{p_{ij}\}$ can not be replicated with any $\rho^\text{hyp}$
\\[2mm]
When we say that the statistics are replicated, we mean that $\tr[\rho^\text{hyp}\, (A^{r^\text{hyp}_\text{A}}_i\otimes B^{r^\text{hyp}_\text{B}}_i)] = p_{ij}$ holds for all $i,j$. For each hypothetical device $(r^\text{hyp}_\text{A}, r^\text{hyp}_\text{B})$, we can determine the case we are in by minimizing the SDP \eqref{SDP}: If it returns minimal negativity $\mathcal{N}_{\text{min}} = 0$, we are in case 2. If the minimum negativity is $\mathcal{N}_{\text{min}}>0$, we are in case 1. If it returns nothing, we are in case 3.

We know that the sample state $\rho$ must have been entangled if, for all possible hypothetical devices $(r^\text{hyp}_\text{A}, r^\text{hyp}_\text{B})$, only case 1 and 3 appear. However, if case 2 appears -- even for one hypothetical device -- we cannot say with certainty that the state was entangled. Hence, this procedure allows for semi-device independence. 

Assuming a maximally entangled sample state, we can derive the threshold below which device-independent entanglement detection is impossible. (details in \ref{minimal_rArB_derivation}):
\begin{equation}
    r_\text{A}r_\text{B}  > \frac{1}{3}.
    \label{detect_cond}
\end{equation}
In particular, both sample detector contrasts $r_\text{A}$ and $r_\text{B}$ must be greater than $1/3$. Intuitively, the worse the contrast of the sample detectors, the more likely are false-negative results. Fig.~\ref{fig:device indep} shows the correlation for the example of a maximally entangled sample state measured with equal sample devices ($r_A=r_B$). More details can be found in \ref{Method_semi_device_indep}.
\begin{figure}[h!]
    \centering
    \includegraphics[width=0.5 \textwidth]{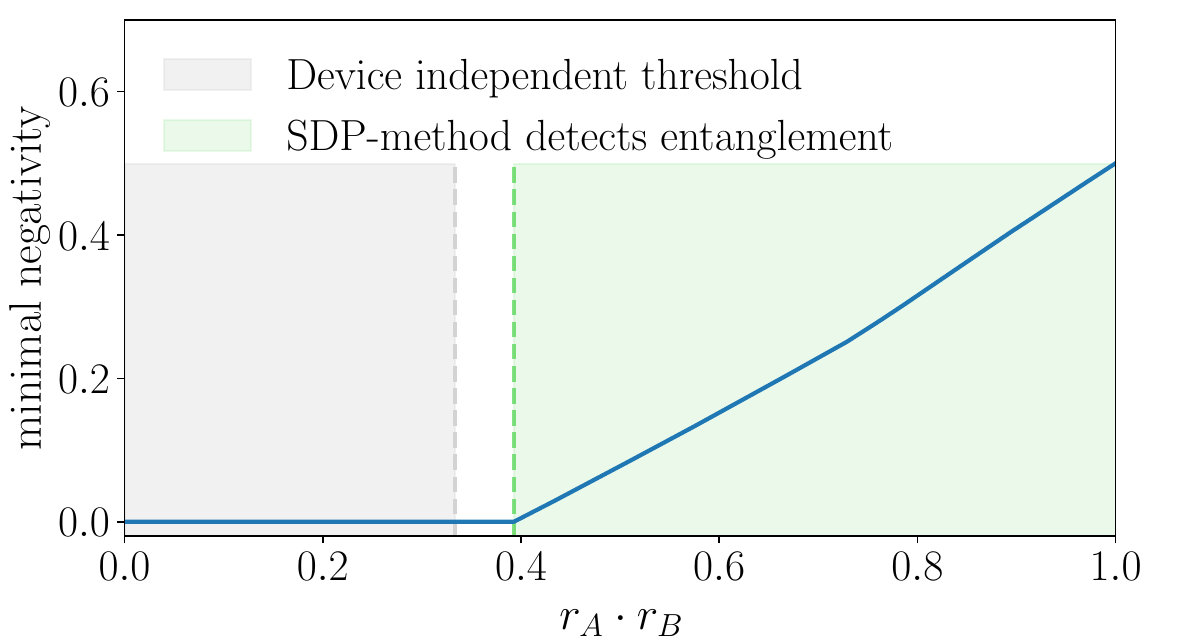}
    \caption{Entanglement detection for equal sample devices $r_\text{A}=r_\text{B}$ in the maximally entangled state $\ketbra{\Psi^-}{\Psi^-}$. The minimal negativities result from solving SDP \eqref{SDP} with perfect hypothetical devices $r^\text{hyp}_\text{A}=r^\text{hyp}_\text{B}=1$, which, in this case, suffice as they yield the relevant results (see \ref{Method_semi_device_indep}). Entanglement detection is possible for those devices, where the minimal negativity is positive, in this scenario: $r_\text{A} \cdot r_\text{B}\ge0,393$.}
    \label{fig:device indep}
\end{figure}

\section{Bell-Experiments and Steering}\label{sec5}
We have established that our setting suffices to detect entanglement. A natural question arising is whether we can also observe steering or detect Bell violations, as these are the next items in the hierarchy. 

With regard to steering, assume that Alice can perform $\sigma_x$ and $\sigma_y$ measurements, while Bob has some RoC device with detector contrast $r$. There are two cases to consider: Either $r$ is known to Bob, or it is not. 
As shown in \ref{Methods_pauli_measurements}, if Bob knows the exact value of $r$, 
he can perform $\sigma_x$ and $\sigma_y$ measurements with the help of post-processing. Then the analogue of the CHSH-inequality for steering, which was shown by Cavalcanti et al.~\cite{cavalcanti2015analogCHSH} becomes
\begin{equation}
    \begin{split}
        &\sqrt{\langle(\sigma_x^A + \sigma_y^A)\sigma_x^B\rangle^2 + \langle(\sigma_x^A + \sigma_y^A)\sigma_y^B\rangle^2} 
        \\
        &+ \sqrt{\langle(\sigma_x^A - \sigma_y^A)\sigma_x^B\rangle^2 + \langle(\sigma_x^A - \sigma_y^A)\sigma_y^B\rangle^2} \; \leq \; 2,
    \end{split}
\label{CHSH-analog}
\end{equation}
where $\langle\sigma^A_i\sigma^B_j\rangle = \tr[\rho \: (\sigma^A_i \otimes \sigma^B_j)]$ for $i,j\in\{x,y\}$. For example, a Werner state $\rho = f\ketbra{\Psi^-}{\Psi^-} + \frac{1-f}{4}\mathds{1}$ yields $\langle\sigma^A_i\sigma^B_j\rangle = -f\delta_{ij}$, such that the left side of \eqref{CHSH-analog} becomes $2\sqrt{2}\,f$. This shows that, for known detector contrast $r$, we can demonstrate steering for Werner states with $f>\frac{1}{\sqrt{2}}$. If, on the other hand, $r$ is unknown, then Bob's post-processed measurements are $r\sigma^B_x$ and $r\sigma^B_y$ (see \ref{Methods_pauli_measurements}). In that case, for Werner states we obtain $\langle\sigma^A_ir\sigma^B_j\rangle = -fr\delta_{ij}$ which yields $2\sqrt{2}\,fr$ for the left side of \eqref{CHSH-analog}. Therefore, steering can be demonstrated even if the exact value of $r$ is unknown, provided that $fr>\frac{1}{\sqrt{2}}$.

To detect Bell violations, we can perform the standard CHSH experiment \cite{CHSH1969} (or any other Bell-type experiment with an arbitrary number of measurement devices and outcomes) by selecting different angle partitions $\{\Delta\varphi_i\}_{i\in I_A}$ defining Alice's (and Bob's) POVM as in \eqref{povm_elmt}. $\{M_\varphi^{r_\text{A}}\}_{\varphi \in [0,\pi)}$ forms a parent POVM for all such measurements, which means that all POVM elements of Alice can be written as $A_i = \int  c_{\varphi, i|A} M_\varphi^r \; d\varphi$, where $c_{\varphi, i|A} \ge 0$ for each $\varphi$ and measurement outcome $i$ in measurement $A$, analogously for Bob. A simple calculation using the \textit{Naimark Theorem} \cite{peres2002textbook} for operator valued measures shows that this is enough to already know that a local hidden variable (LHV) model exists, and hence no Bell-violation is possible (details in \ref{Methods_Bell}).

\section{Recovering the Klein--Nishina formula from our formalism}
\label{sec6}
One example of a frequently discussed experiment that fits into our framework is the Compton scattering of linearly polarized photons on free electrons (Fig.~\ref{fig:rotation_covariant}b). Until today, the Klein--Nishina formula \cite{KleinNishina} is used to describe this setting \cite{Wu1950, KasdayWu, Hiesmayr_Moskal_2019, Caradonna_2019, Abdurashitov2022, Ivashkin2023, Watts2021, strizhak2022, Zaidi2004}. Although there are approaches for quantum information descriptions \cite{Caradonna_2019, Hiesmayr_Moskal_2019, Caradonna2025, Caradonna2024stokes_rep}, an explicit formulation through POVMs is still missing. Our formalism can tranfer specific descriptions like the Klein--Nishina formula to POVM-descriptions by simply adjusting the detector contrast $r$:

First, we consider the Klein--Nishina formula for linearly polarized photons scattered from an electron, which describes the intensity distribution $I$ at azimuthal angle $\theta$ and polar angle $\varphi$ \cite{KleinNishina}:
\begin{equation}
    I = \frac{f_0\sin^2\varphi}{(1+\beta(1-\cos\theta))^3}(1+\frac{\beta^2(1-\cos\theta)^2}{2\sin^2{\varphi}(1+\beta(1-\cos\theta))}).
    \label{klein-nishina}
\end{equation}
Where $f_0 = I_0e^4m_\text{e}^{-2}c^{-4}d^{-2}$ with $I_0$ being the initial intensity, $c$ the speed of light, $m_\text{e}$ the electron mass, and $d$ the distance between detector and scatterer. $\beta = h\nu m^{-1}c^{-2}$ is the quotient of the photon's and the electron's energy. As we show in \ref{derivation_klein_nishina}, the probabilities of detecting a photon at angle $\varphi$ (fixing $\theta$) match the ones given by our RoC-framework, iff the detector contrast is
 \begin{equation}
     r(\beta, \theta) = \frac{1}{\beta\,(1-\cos\theta) + 1\,/\,(1+\beta\,(1-\cos\theta))}.
     \label{solved_for_r}
 \end{equation}
Thus, we can directly translate between the Klein--Nishina formula and our RoC-framework. 
\begin{figure}
    \centering
    \includegraphics[width=0.7\linewidth]{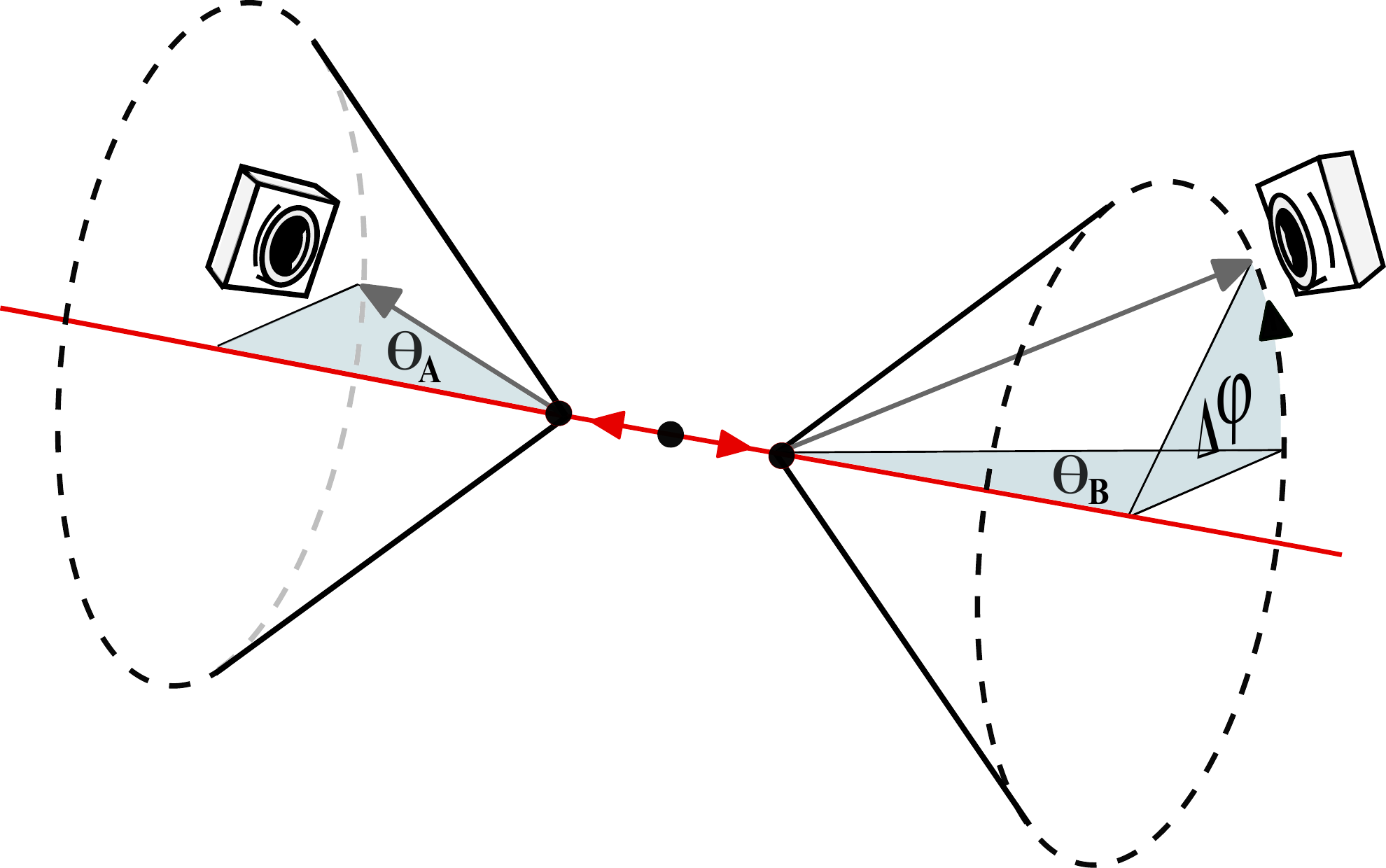}
    \caption{Two Compton polarimeters. A source emits orthogonally linearly polarized (entangled) photons which scatter from electrons. Coincidence events of the photons that are scattered in directions $(\theta_\text{A}, 0)$ and $(\theta_\text{B}, \Delta\varphi)$ are detected.}
    \label{fig:entanglement}
\end{figure}

Secondly, we can recover the probabilities for case of positron--electron annihilation photons (which are orthogonally linearly polarized) being Compton scattered into two directions described by the azimuthal angles $\theta_\text{A}$, $\theta_\text{B}$ and $\Delta\varphi = \varphi_\text{B} - \varphi_\text{A}$, the angle between their scattering planes (see Fig.~\ref{fig:entanglement}). There has been a vast amount of experiments and theoretical work, e.g., about correlations in this process \cite{Pryce1947,KasdayWu, Ivashkin2023, Caradonna_2019, Hiesmayr_Moskal_2019, Watts2021, parashari2024, tkachev2025, moskal2025nonmaximal}. It is usually described by a probability distribution, sometimes called Pryce--Ward--equation after \cite{Pryce1947}. When fixing both the photon energy and the azimuthal angles $\theta_i$, renormalization yields
\begin{equation}
\label{normalized_diff_angle}
   \mathds{P}_{\theta_\text{A}, \theta_\text{B}}(\Delta\varphi) = \frac{1}{\pi}[1-\alpha(\theta_\text{A})\alpha(\theta_\text{B})\cos(2\Delta\varphi)].
\end{equation}
$\alpha(\theta_\text{A})$ and $\alpha(\theta_\text{B})$ are the analyzing powers of the Compton polarimeters \cite{Abdurashitov2022}. Defining the orthogonally linearly polarized photon state as the pure state $\rho_\perp$ given by $\frac{1}{\sqrt{2}}(\ket{\text{HV}} - \ket{\text{VH}})$, the RoC framework yields the probability 
\begin{equation}
\begin{split}
    \mathds{P}^{\text{RoC}}_{r_\text{A} r_\text{B}}(\Delta\varphi) &= \int_0^{\pi}\frac{\tr\bigl [\rho_\perp \, (M^{r_\text{A}}_{\varphi}\otimes M^{r_\text{B}}_{\varphi + \Delta\varphi})\bigr ]}{\pi^2}d\varphi\\
    & = \frac{2}{\pi}[1 - r_\text{A}r_\text{B}\cos(2\Delta\varphi)].
\end{split}
\label{P_compt}
\end{equation}
Hence, we can reproduce \eqref{normalized_diff_angle} for $r_i = \alpha(\theta_i)$. According to \cite{Ivashkin2023, Snyder1948} the maximal value of $\alpha(\theta_i)$ (and hence $r_i$) in this particular setting is $0.69$. For known $r$, we can use the entanglement detection scheme in \ref{sec3}. Our analysis from \ref{sec4} shows that this setting can even do semi-device independent entanglement detection: For the best-case scenario, $r_\text{A}=r_\text{B}=0.69$, entanglement in e.g.~Werner states can be detected for $f>0.8$ (analogously to Fig.~\ref{fig:device indep}).

Derivation \eqref{P_compt} settles the dispute between \cite{Hiesmayr_Moskal_2019, Caradonna_2019,Ivashkin2023} regarding the distinguishability of separable and entangled orthogonally polarized states, namely, $\rho_\perp$ from above and the separable state $\rho_\text{mix}=(\ketbra{\text{VH}}{\text{VH}} + \ketbra{\text{HV}}{\text{HV}})/2.$ For $\rho_\text{mix}$, the same derivation as in \eqref{P_compt} yields an angle-independent distribution ($\mathds{P}^{\text{RoC}}_{r_\text{A} r_\text{B}}(\Delta\varphi) = \frac{1}{\pi}$). This is in agreement with Caradonna et al.~\cite{Caradonna_2019} and the results of Bohm and Aharonov \cite{Bohm_Aharonov_1957}. 

Thirdly, when analyzing the above scattering towards correlations, in many cases the fraction $R$ of perpendicular and parallel scattering events is considered \cite{Abdurashitov2022,Ivashkin2023, KasdayWu, BrunoAgostino1977, wilsonLowe1976, Langhoff1960, Pryce1947, Faraci1974, Caradonna2024stokes_rep}. Again, this can be derived easily with our framework, using \eqref{P_compt}:
\begin{equation}
    R = \frac{P^{\text{RoC}}_{r_\text{A}r_\text{B}}(\pi/2)}{P^{\text{RoC}}_{r_\text{A}r_\text{B}}(0)} = \frac{1+r_\text{A}r_\text{B}}{1-r_\text{A}r_\text{B}}.
\end{equation}

In summary, our formalism is well suited for ongoing discussions, especially about quantum correlations. The methods presented in \ref{sec2}, \ref{sec3} and \ref{sec4}, allow for a simple yet precise analysis of quantum correlations, offering a more comprehensive perspective than case-specific descriptions, which often fail to capture the whole picture.

\section{Soft-Drink-Based detectors}
\label{sec7}
We now introduce a simple showcase experiment based on photon-scattering from a selection of soft drinks that demonstrates how our formalism can be easily applied. First, note that the polarization of a photon can be described as a qubit, and that linear polarization can be rotated with a half-wave plate (see Fig.~\ref{fig:rotation_covariant}c). 
When being shot at suitable liquids, linearly polarized photons scatter polarization-dependently, which yields a power distribution over the angles $\varphi\in [0,2\pi)$. In accordance with our symmetry assumptions, instead of measuring this distribution with a detector rotating around the scatterer, one can rotate the wave plate (and thus the incident polarization), while keeping the detector at a fixed angle. The performed experiment is visualized in Fig.~\ref{fig:setup}. Note that the photons are emitted by a 30~mW continuous-wave laser with a beam diameter of 5~mm. Although the laser is not a single-photon source, we operate far below any optical nonlinearity, such that the scattering behavior and hence the determined parameter $r$ can be directly extrapolated to single-photon measurements.
\begin{figure}[h!]
    \centering
    \includegraphics[width=0.9\linewidth]{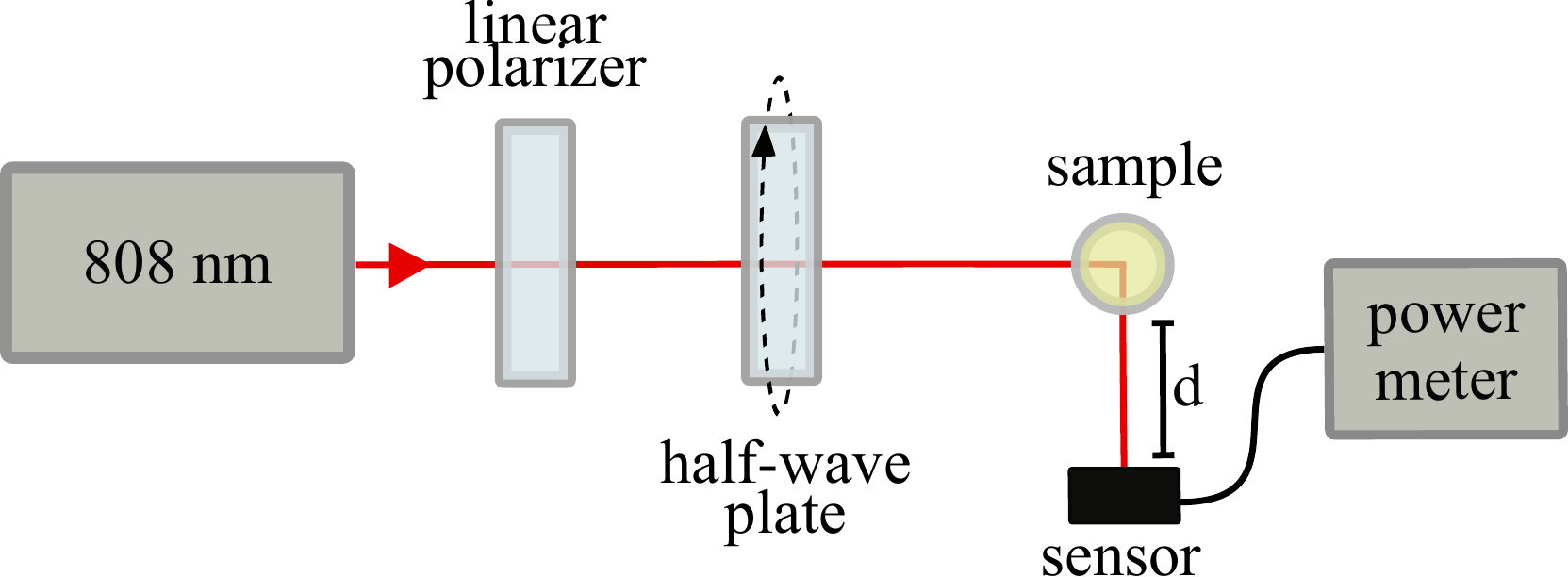}
    \caption{The experiment: A single mode, fiber-coupled, 30~mW power, $808$~nm, continuous-wave diode laser, followed by a linear polarizer (anti-reflection coated for $808$~nm) with an extinction ratio of $10^8$:$1$. The polarization is rotated by a half-wave plate (anti-reflection coated for $808$~nm). The light scatters from different sample liquids contained in a glass test tube (diameter: $25$ mm) and is measured roughly at a right angle compared to the incoming beam using an optical power meter. Powers were measured with an integration time of 1~s.
    }
    \label{fig:setup}
\end{figure}

The complex composition of the molecules, particles and bubbles found in some of the examined liquids, makes it difficult to determine the relative contributions of Rayleigh, Mie and geometric scattering. However, by construction, our approach remains valid even without a detailed model of the underlying scattering mechanisms or the specific geometry of the setup, as it only depends on the realization of the rotational symmetry: The distribution we expect for the linearly polarized state $\rho_\text{lin} = \frac{1}{2}\mathds{1} + \frac{1}{2}\sigma_x$ is described by the RoC-POVM \eqref{POVM}:
\begin{equation}
\mathds{P}^r_{\rho_\text{lin}}(\varphi) = \tr[\rho_\text{lin} \, M_\varphi^r]=\frac{1 + r \cos(2\varphi)}{2\pi}.
\label{lemonade_probs}
\end{equation}
By fitting the measurement data to \eqref{lemonade_probs}, we can determine the detector contrast $r$ of this setup. It is influenced by, e.g., the 
detector distance $d$, the diameter of the test tube, laser stability, detector efficiency, but also, to a large extent, by the selected sample liquid. The results in (Table \ref{tab:liquid_comparison}) show that, in a fixed setup, we can change the detector contrast within a wide range by changing the sample liquid. These measurements were done for $d=20$~mm, with the dominant contribution to the uncertainty of $r$ arising from the laser intensity stability.
\begin{table}
    \centering
    \begin{tabular}{c|c}
        Liquid & \; Detector contrast $r$ from fit\\
        \hline
        Elderflower sirup & $0.246\pm0.001$\\
        Fanta-lemon & $0.390\pm0.001$\\
        Vitamin water & $0.395\pm0.001$\\
        Maracuja lemonade & $0.489\pm0.001$\\
        Grapefruit lemonade & $0.498\pm0.001$\\
        Apple lemonade & $0.610\pm0.001$\\
    \end{tabular}
    \caption{Changing the liquid sample significantly influences the measured detector contrast $r$. The measurements were performed with a fixed detector distance $d=20$ mm. $r$ was obtained by fitting the data to $f(\varphi) = a(1 + r\cos(2(\varphi + \varphi_0)))$. The dominant contribution to the uncertainty of $r$ arises from the laser intensity stability.}
    \label{tab:liquid_comparison}
\end{table}
As a second step, the detector contrast $r$ was further increased by reducing the distance between the detector and the liquid from $d=20$ mm to $d=15$ mm. The corresponding results, showing the measured scattered output power as a function of the angle, are presented in Fig.~\ref{fig:Power_distr}.

Using apple lemonade as a scatterer, we find that a detector contrast of at least $r=0.704\pm0.001$ can be achieved, which is higher than the known theoretical upper bound for Compton scattering experiments of positron--electron annihilation photons (\ref{sec6}).
While any characterized detector with $r>0$ can detect entanglement, see \ref{sec3}, a lemonade-based detector with this quality even suffices to detect entanglement in the semi-device independent setting described in \ref{sec4}. Recall that the threshold for semi-device independent entanglement detection with equal devices is  $r_\text{thresh}=\sqrt{0.393}\approx 0.627$ (see Fig.~\ref{fig:device indep}), assuming a maximally entangled state. In contrast, two identical detectors with $r=0.704$ can even detect entanglement of Werner states, with fidelity $f\geq0.795$ (this corresponds to an achievable score of $S\geq 2.249$ in a CHSH inequality). 

One way to implement an entanglement detection scheme is to use entangled photon pairs generated by spontaneous parametric down-conversion and hand Alice and Bob each a glass of lemonade. This is feasible in practice, as entanglement sources with the required fidelity are widely available. The measured distribution of coincidence events in all angle segments can then be used to solve the SDP \eqref{SDP}. Of course, achieving a sufficiently high rate of coincidence events such that the experiment is feasible in a reasonable time frame is challenging. Nonetheless, our findings indeed give no reason to believe that entanglement detection with lemonade should be ruled out.  

\begin{figure}[h!]
    \centering
    \includegraphics[width=1\linewidth]{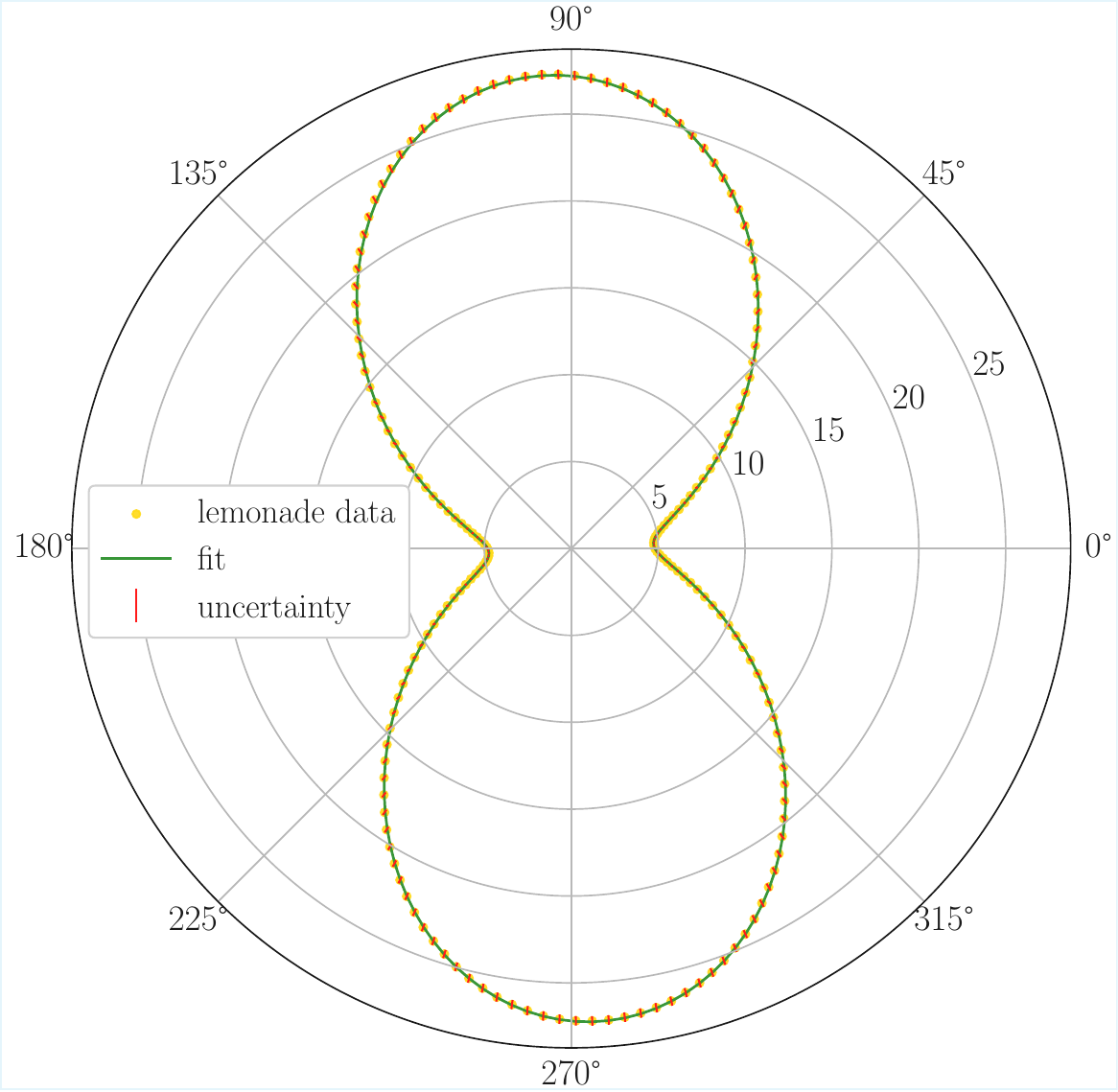}
    \caption{Scattered power distribution in mW for apple lemonade, resulting in $r=0.704\pm0.001$, measured at a distance of $d=15$ mm. $r$ was obtained by fitting the data to $f(\varphi) = a(1 + r\cos(2(\varphi + \varphi_0)))$. The dominant contribution to the uncertainty of $r$ arises from the laser intensity stability.}
    \label{fig:Power_distr}
\end{figure}

\section*{Conclusion}
Our work opens the door to a clean, quantum information analysis of rotationally covariant single photon experiments. In particular, we derived the POVM-formulation of the Klein\text{--}Nishina formula, describing Compton scattering of polarized photons. In fact, we found that RoC measurements are indistinguishable from a quantum information perspective, as they are fully classified by the established detector contrast $r$, determining the quality of the measurement. Furthermore, we have introduced an entanglement certification method for RoC measurements, only requiring the measured data and an SDP solver, and have generalized this method for unknown detector contrasts $r$. We have shown that any RoC setup admits a parent POVM and is hence not suitable for Bell--violations, while steering with one characterized RoC measurement is possible.

One direction for future work is the analysis of U(1) representations beyond qubits, by analyzing U(1) symmetries on multi-photon states and on arbitrary systems. Another direction is to extend this analysis to detectors that omit an O(3) covariance. Those are, for example, natural for scattering experiments, in which the detector is rotated on two angles (see Fig.~\ref{fig:rotation_covariant}b). Additionally, one could test the predictions of the Klein--Nishina Formula with our framework. Lastly, our hope is to enable entanglement detection with our introduced lemonade-experiment.

\begin{acknowledgments}
We gratefully acknowledge discussions with Otfried Gühne, Mathias Kleinmann, Robert Raußendorf, and Tobias Osborne. MF acknowledges helpful comments by Lukas Hantzko, Lennart Binkowski, and Hauke Hinrichs. MF and RS acknowledge support by the Quantum Valley Lower Saxony, the Cluster of Excellence Quantum Frontiers, and by the BMBF projects ATIQ, QUIX, SEQUIN and CBQD.
\end{acknowledgments}

%

\section*{METHODS}
\subsection{Derivation of $M_0$}
\label{DerivationM0}
We derive \eqref{M_0}: Dealing with a probability distribution, we assume
\begin{equation}
    \begin{split}
        1 = \mathds{P}_\rho([0,\pi]) = &\int_{0}^{\pi} \frac{1}{\pi}\;\tr[\rho \; U^*_{\varphi}M_0U_{\varphi}]\;d\varphi \\
        = &\;\tr[M_0\int_0^{\pi} \frac{1}{\pi}U_{\varphi} \rho \:U^*_{\varphi} \; d\varphi]\\
        = &\;\tr[M_0 \; \bar\rho],
    \end{split}
\end{equation}
where $\bar\rho = \frac{1}{2}(\mathds{1}+\alpha\sigma_z)$ is the density matrix averaged over all possible rotations along the z-axis ($\alpha\in[-1,1]$). This must hold for any state $\rho$, so
\begin{equation}
    1 = \frac{1}{2}\tr[M_0(\mathds{1}+\alpha\:\sigma_z)] \;\;\; \forall \alpha\in[-1,1]
\label{cond1}
\end{equation}
Since $M_0$ is positive (and hermitian), it can be written in the form $M_0=\lambda\mathds{1}+\vec k \vec\sigma$ for some real $\lambda$ and $\vec k$ with $|\vec k|^2 = k_x^2+k_y^2+k_z^2\le \lambda^2$. Using $\tr[\sigma_j]=0\;\forall j$ and $\sigma_z^2=\mathds{1}$, \eqref{cond1} becomes $1 = \frac{1}{2}(2\lambda + 2\alpha k_z)\;\forall\alpha\in[-1,1]$, which immediately yields $k_z = 0$, $\lambda=1$ and $k_x^2+k_y^2\le1$. Thus,
\begin{equation}
    M_0 = \mathds{1} + r (\cos{\theta} \: \sigma_x + \sin{\theta} \: \sigma_y )
\end{equation}
for some setup-dependent $r\in[0,1]$ and $\theta\in[0,2\pi)$.
The initial rotation labeled with the angle $0$ can be chosen arbitrarily, so we can set $\theta=0$ which yields the assertion.

\subsection{Determining the detector contrast $r$ experimentally}
\label{measuring_r}
As we have seen in \ref{sec2}, an RoC measurement device can be described by the POVM $\{U_\varphi M_0 U_\varphi^* \; | \; \varphi \in [0,2\pi), \; M_0=\mathds{1} + r \sigma_x\}$. We can determine the detector contrast $r$ as follows:
Assume we have a source emitting a reliably linearly polarized state $\rho_a = \frac{a}{2}\mathds{1} + (1-a)\ketbra{\text{H}}{\text{H}}$ for some known $a\in[0,2]$, where $\ket{\text{H}}$ denotes a horizontally polarized state. By measuring in the linearly polarized basis, we can determine the quotient of the number of horizontally and vertically polarized measurement outcomes
\begin{equation}
    \frac{N_\text{H}}{N_\text{V}} =\frac{\tr[\rho \: M_0]}{\tr[\rho \: M_{\pi/2}]}.
    \label{r_meas_quotient}
\end{equation}
Since we have not yet used our freedom of orienting the Bloch sphere, we will do so now. The polarization's rotation along the propagation direction is a subgroup of SO(3), which has a SU(2) representation. We choose the xy-rotation of the polarization to correspond to the $\pm\sigma_x$-direction in the Bloch sphere, i.e.
\begin{equation}
    \ketbra{\text{V}}{\text{V}}=\frac{\mathds{1}+\sigma_x}{2} \;\; \text{ and }\;\; \ketbra{\text{H}}{\text{H}}=\frac{\mathds{1}-\sigma_x}{2}.
\end{equation}
Then, one can explicitly derive the term \eqref{r_meas_quotient}, namely
\begin{equation}
    \begin{split}
    \tr[\rho_a \: M_0]=& \; \tr[\rho_a\:(\mathds{1}+r\sigma_x)] = 1 + r \, (a-1) \\
    \tr[\rho_a \: M_{\pi/2}] = & \; \tr[\rho_a \:e^{i\frac{\pi}{2}\sigma_z} \, (\mathds{1}+r\sigma_x)\,e^{-i\frac{\pi}{2}\sigma_z}] \\
    =& \;\: 1 - r \, (a - 1),
    \end{split}
\end{equation}
which yields 
\begin{equation}
    r = \frac{N_\text{H}/N_\text{V} - 1}{(N_\text{H}/N_\text{V} + 1)(a -1)}.
\end{equation}
\subsection{Deriving the SDP Formulation}
\label{derive_sdp}
We show that \eqref{min} can be formulated as the SDP in \eqref{SDP}: Any hermitian operator $A$ can be divided into two positive semidefinite operators $A_+$ and $A_-$ such that $A = A_+ - A_-$. Since the trace norm equals the sum of the eigenvalues' absolute values, we obtain $$\|\rho^\Gamma\|_1 = \tr[\rho^\Gamma_+] + \tr[\rho^\Gamma_-]=2 \, \tr[\rho^\Gamma_+] - \tr[\rho^\Gamma] = 2 \, \tr[\rho^\Gamma_+] - 1.$$
Hence, $\mathcal{N}(\rho) = \tr[\rho^\Gamma_+]-1$. By applying Lemma \ref{tr+_lemma}, we immediately obtain the form in \eqref{SDP}.

\subsection{Error Estimation for RoC Measurements}
\label{error_est_derivation}
Assume that we perform $n$ measurements in an RoC setting yielding I.I.D. random variables $X_1, X_2, ..., X_n$. This means assuming that our measurement device does not change over time and that the different shots do not influence each other. Under this assumption, we can apply Hoeffding's inequality \cite{hoeffding1994probability}, namely 
\begin{equation}
    \mathds{P}(|\sum_{i=1}^n X_i - \mu | \ge \varepsilon) \le 2 \exp{-\frac{2\varepsilon^2}{\sum_{i=1}^n(b_i-a_i)^2}} \; \forall \epsilon>0,
    \label{hoeffding}
\end{equation}
where $\mu$ is the expectation value of the underlying
probability distribution, and $a_i \le X_i \le b_i$ almost surely (i.e.~the probability that $X_i$ lies in this range is 1).

Having simultaneously performed RoC measurements at angle segment $\Delta \varphi_i$ on Alice's qubit and at $\Delta \varphi_j$ on Bob's qubit, the probability for a detection event is given by $P_{ij} = \tr[\rho \, (A_i\otimes B_j)]$. Measuring $n$ times, we obtain a list $(X_1, ..., X_n)$, where $X_k=1$ if the $k$'th measurement detected an event in the chosen angle segments of Alice and Bob and $X_k = 0$ otherwise. Hence, the measured frequency is $p_{ij} = \sum_{k=1}^nX_k/n$ and in the above notation, we have $a_k = 0$ and $b_k=1$ for all $k$. Therefore, \eqref{hoeffding} becomes
\begin{equation}
    \mathds{P}(|p_{ij} - P_{ij}| \ge \varepsilon) \le 2e^{-\frac{2}{n}\varepsilon^2} \;\;\; \forall \epsilon > 0.
    \label{bound_probs}
\end{equation}
Now we can choose a confidence level $c\in[0,1]$, 1 meaning we have a $100\%$ confidence that the bound holds. Then, by \eqref{bound_probs}, we know with certainty $c$ that our measured frequency deviates by less than $\sqrt{\frac{n}{2}\ln \frac{2}{1-c}}$ from the theoretical probability, i.e.,

\begin{equation}
    \mathds{P}\Bigl ( |p_{ij}- P_{ij}| < \sqrt{\frac{n}{2}\ln \frac{2}{1-c}} \: \Bigr ) > c.
    \label{prob_bound}
\end{equation}

\subsection{Example for Semi-device Independent Entanglement detection}
\label{Method_semi_device_indep}
Fig.~\ref{fig:device indep} shows one example of entanglement detection, where we assume that we have identical sample devices $r_A = r_B$. The full detection scheme, as explained in \ref{sec5}, considers all hypothetical devices and not only the optimal one. Fig.~\ref{device_independence_complete} explains, how this is done in more detail.
\begin{figure}[h!]
    \centering
    \includegraphics[width=0.5 \textwidth]{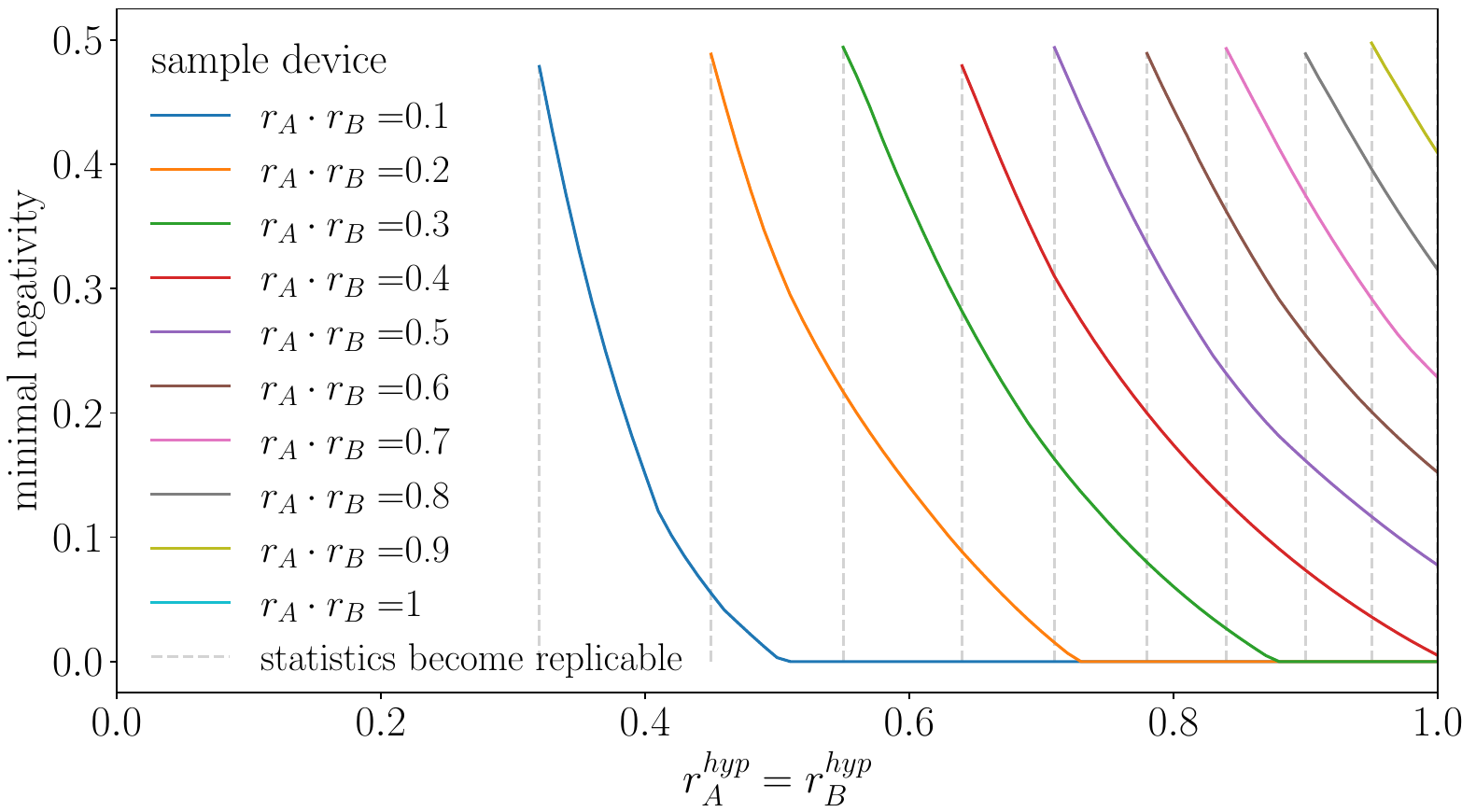}
    \caption{Entanglement detection for unknown sample devices, assuming that $r_\text{A}=r_\text{B}$. The minimal negativities result from solving SDP \eqref{SDP} (using 'MOSEK'-solver in CVXPY). Statistics $\{p_{ij}\}$ are simulated with the maximally entangled sample state $\ketbra{\Psi^-}{\Psi^-}$ and several sample devices. Each colored line represents the minimized negativities for all hypothetical devices. The left-hand side of the dotted lines represents case 1. Those for which the minimized negativity is positive represent case 2. Those for which it reaches zero represent case 3. Entanglement is detected for sample devices with $r_\text{A}r_\text{B}\geq0.4$.}
    \label{device_independence_complete}
\end{figure}

\subsection{Deriving Minimal $(r_\text{A},r_\text{B})$ for Detecting Entanglement}
\label{minimal_rArB_derivation}
The maximally entangled state $\ketbra{\Psi^-}{\Psi^-}$ is the Werner state $W_f = f\ketbra{\Psi^-}{\Psi^-} + \frac{1-f}{4} \mathds{1}$ with $f=1$. The probability of a detection event in angles $\varphi$ and $\psi$ is
\begin{equation}
    \tr[W_{f}(M_\varphi^{r_\text{A}} \otimes M_\psi^{r_\text{B}})] = \frac{1 - f\:r_\text{A}r_\text{B}\cos[2\,(\varphi-\psi)]}{\pi^2},
    \label{Wernerprobs}
\end{equation}
where we use the POVM elements as defined in \eqref{POVM}.
This shows that having a Werner state with fidelity $f$ and a measurement device defined by $(r_\text{A}, r_\text{B})$ is equivalent to having an ideal measurement device (i.e.~$r_\text{A}=r_\text{B}=1$) and a Werner state with fidelity $\tilde f = f\:r_\text{A}r_\text{B}$. Given the probability $P_{\varphi,\psi}$ for $\psi=\varphi=0$ from an experiment, using \eqref{Wernerprobs}, we can easily determine $\tilde f = 1 - 4\pi^2P_{00}$. A Werner state is entangled, iff its fidelity $\tilde f > 1/3$. Hence, given a maximally entangled state ($f=1$) and the statistics, in order to detect entanglement, the detector contrasts $(r_\text{A}, r_\text{B})$ must be at least
\begin{equation}
    r_\text{A}r_\text{B}  > \frac{1}{3}.
\end{equation}

\subsection{Pauli-Measurements with RoC-Devices}
\label{Methods_pauli_measurements}
Let Bob's RoC device be defined by detector contrast $r$. We can define two measurement devices, as in \eqref{povm_elmt} by partitioning the angles  horizontally (i.e.~$\{\Delta\varphi_i^y\} = \{[0,\pi/2],[\pi/2, \pi]\}$) and vertically (i.e.~$\{\Delta\varphi_i^x\} = \{[\pi/4, 3\pi/4], [0,\pi/4]\cup[3\pi/4, \pi]\}$). Define the operators
\begin{equation}
    A^y = \frac{1}{\pi}\int_0^{\pi/2} M_\varphi^r d\varphi,  \;\;\;\; B^y = \frac{1}{\pi}\int_{\pi/2}^{\pi} M_\varphi^r d\varphi
\end{equation}
and $A^x, \:B^x$ respectively. A simple calculation shows that $\frac{\pi}{2r}(B^x-A^x) = \sigma_x$ and $\frac{\pi}{2r}(B^y-A^y) = \sigma_y$. Thus, if the value of $r$ is known, post-processing enables measurements of $\sigma_x$.

\subsection{Parent POVMs imply LHV Models}
\label{Methods_Bell}
We say that a POVM $\{M_\lambda\}_\lambda$ is a \textit{parent POVM} of a family of POVMs $\{\{A^i_x\}_{x\in S_i} | i \in I\}$ if, for all $i$ and $x$, we can write $A^i_x = \int d\lambda \; a^{x|i}_\lambda \; M_\lambda$, for some $a_\lambda^{x|i} \ge 0$.

According to the Naimark Theorem \cite{holevo2011probabilisticaspectsofquantum, peres2002textbook} any POVM $\{M_\lambda\}_\lambda$ on a Hilbert space $\mathcal{H}$ can be extended to a projection valued measure (PVM) $\{P_\lambda\}_\lambda$ on a Hilbert space $\mathcal{K}$ of higher dimension and there exists a projection $V: \mathcal{H}\mapsto \mathcal{K}$ such that $M_\lambda = V^*P_\lambda V$. Note: V is an isometry on the subspace $\mathcal{H}\subset \mathcal{K}$, i.e. $V^*V = \mathds{1}$.

Let $\{M^A_\lambda\}_{\lambda}$ be the parent POVM for Alice's POVMs $\{A^i_x\}_x$ and $\{M^B_{\tilde\lambda}\}_{\tilde\lambda}$ for Bob's $\{B_y^j\}_y$. Applying the Naimark theorem for both, we obtain the isometry $V = V^A \otimes V^B$ such that $P^A_\lambda \otimes P^B_{\tilde\lambda} = V (M^A_\lambda \otimes M^B_{\tilde\lambda}) V^*$ are projections. The probability of Alice measuring $x$ in measurement $i$ and Bob measuring $y$ in measurement $j$ is:
\begin{equation}
\begin{split}
     \mathds{P}(x,y\;&|\; i, j) = \tr[\rho \, (A^i_x \otimes B^j_y)] \\
     = &\int d\lambda \int d\tilde\lambda \; a_\lambda^{x|i} \; b_{\tilde\lambda}^{y|j} \; \tr[\tilde\rho \, (P_\lambda \otimes P_{\tilde\lambda})] \\
     = & \int d\gamma \; a_\gamma^{x|i} \; b_{\gamma}^{y|j} \; \mathds{P}(\gamma)
     \label{LHV_for_parent}
\end{split}
\end{equation}
where $\tilde\rho = V\rho V^*$ and $\gamma = (\lambda, \tilde\lambda)$. Considering
\begin{equation}
    \mathds{1} = \sum_x A^i_x = V^* (\int d\lambda (\sum_x a_\lambda^{x|i}) P^A_\lambda) V
\end{equation}
and the fact that $V$ acts as the identity on the subspace $\mathcal{H}$, from the orthogonality of the $P^A_\lambda$ follows that $\sum_x a_\lambda^{x|i} = 1$ and, analogously, $\sum_y b_\lambda^{y|i} = 1$. Hence, $a_\lambda^{x|i}$ are probabilities for obtaining outcome $x$, when given measurement $i$ and $\lambda$, which shows that we have found an LHV model in \eqref{LHV_for_parent}. \\

In our case $\{M_\varphi^{r_A}\}_{\varphi\in[0,\pi)}$ and $\{M_\varphi^{r_B}\}_{\varphi\in[0,\pi)}$ as in \eqref{POVM} are the parent POVMs for the Bell measurements
\begin{equation}
\begin{split}
\{A^i_x\}_x &= \{\int_{\varphi \in \varphi^i_x}M_\varphi^{r_\text{A}} \; | \; x \in X_i\} \\
\{B^j_y\}_y &= \{\int_{\varphi \in \Psi^j_y}M_\varphi^{r_\text{B}} \; | \; y \in Y_j\}, 
\end{split}
\label{Alice'sPOVMs}
\end{equation}
where $X_i$ and $Y_j$ are index sets and $\cup_{x\in X_i} \varphi^i_x = \cup_{y\in Y_i} \Psi^i_y = [0, \pi)$. Note that the typical CHSH setting is obtained for $i,j\in\{1,2\}$ and $X_i = Y_j = \{0, 1\}$ by assigning $+1$ to the first measurement outcome and $-1$ to the second outcome, for both POVMs, respectively.
\subsection{Comparison to Klein--Nishina}
\label{derivation_klein_nishina}
Fixing the azimuthal angle $\theta$ and the energy quotient $\beta$ in the Klein--Nishina formula for linearly polarized light \eqref{klein-nishina}, the probability of finding a photon in angle $\varphi$ is
\begin{equation}
\begin{split}
     &\mathds{P}_{\beta,\theta}(\varphi) = \frac{I(\varphi)}{\int_0^{\pi}I(\gamma)\;d\gamma}\\[1mm]
     &= \frac{2\sin^2\varphi \ (1+\beta \, (1-\cos\theta)) \; + \; \beta^2 \, (1-\cos\theta)^2}{\pi[1 + \beta (1-\cos\theta)+\beta^2 \, (1-2\cos\theta + \cos\theta)^2]}.
\end{split}
\label{prob_kn}
\end{equation}
For a maximally linearly polarized state $\rho = \frac{1}{2}(\mathds{1} - \sigma_x)$, our RoC description yields the probability
\begin{equation}
    \mathds{P}_{r}^{RoC}(\varphi) = \frac{1}{\pi}\tr[\rho \, U_{-\varphi}M_0U_\varphi] = \frac{1-r\cos(2\varphi)}{\pi}.
\label{prob_roc}
\end{equation}
Solving the equation $\mathds{P}_r^{RoC}(\varphi) = \mathds{P}_{\beta, \theta}(\varphi) \;\;\; \forall \; \varphi\in[0,\pi)$ for $r$
yields equation \eqref{solved_for_r}.
\newpage
\appendix
\onecolumngrid
\section{Appendix}
\begin{lem}
    Let A be a hermitian operator on a finite dimensional Hilbert space $\mathcal{H}$. Then  $$\tr[A_+] = \inf_{T \in \mathcal{T}}\tr[T] \;\text{ with }\; \mathcal{T} = \{T:\mathcal{H}\rightarrow\mathcal{H} | \; T\ge0, \; T\ge A\}.$$
    \label{tr+_lemma}
\end{lem}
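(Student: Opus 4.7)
My plan is to prove the lemma by verifying the two inequalities separately, using the spectral decomposition of $A$ as the main tool. Since $A$ is hermitian on a finite-dimensional space, write $A = \sum_i \lambda_i \ket{i}\bra{i}$ with real eigenvalues $\lambda_i$ and an orthonormal eigenbasis $\{\ket{i}\}$. The positive and negative parts are then $A_+ = \sum_{i:\lambda_i>0}\lambda_i\ket{i}\bra{i}$ and $A_- = \sum_{i:\lambda_i<0}(-\lambda_i)\ket{i}\bra{i}$, giving the standard decomposition $A = A_+ - A_-$ with $A_\pm\ge0$. In particular $\tr[A_+] = \sum_{i:\lambda_i>0}\lambda_i$.

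For the easy direction $\inf_{T\in\mathcal{T}}\tr[T]\le\tr[A_+]$, I would simply exhibit $T = A_+$ as a feasible point. It is manifestly positive semidefinite, and $T - A = A_+ - (A_+-A_-) = A_-\ge 0$, so $T\ge A$. Thus $A_+\in\mathcal{T}$, proving that the infimum is no larger than $\tr[A_+]$ (and incidentally that it is attained).

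For the reverse inequality $\tr[T]\ge\tr[A_+]$ for every $T\in\mathcal{T}$, I would evaluate $\tr[T]$ in the eigenbasis of $A$: $\tr[T] = \sum_i\bra{i}T\ket{i}$. The two constraints defining $\mathcal{T}$ give pointwise lower bounds on these diagonal matrix elements, namely $\bra{i}T\ket{i}\ge 0$ from $T\ge0$, and $\bra{i}T\ket{i}\ge\bra{i}A\ket{i} = \lambda_i$ from $T\ge A$. Combining, $\bra{i}T\ket{i}\ge\max(0,\lambda_i)$ for every $i$. Summing over $i$ yields $\tr[T]\ge\sum_i\max(0,\lambda_i) = \tr[A_+]$, which is exactly the desired bound.

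There is no real obstacle here; the only thing to watch is that the bound $\bra{i}T\ket{i}\ge\max(0,\lambda_i)$ uses the two semidefinite constraints independently and does not require any joint structural property of $T$. Finite dimensionality is used implicitly to ensure that $A$ actually admits an orthonormal eigenbasis and that the spectral sums are finite, but the argument would extend verbatim to the bounded self-adjoint case by integrating against the spectral measure. Together the two inequalities establish $\tr[A_+] = \inf_{T\in\mathcal{T}}\tr[T]$, and the infimum is attained at $T = A_+$.
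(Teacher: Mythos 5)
Your proof is correct and follows essentially the same route as the paper's: both exhibit $A_+$ as a feasible point and then bound $\tr[T]$ from below by evaluating it in the eigenbasis of $A$, using $T\ge A$ on the positive-eigenvalue directions and $T\ge 0$ on the rest. The only difference is presentational --- you argue the lower bound directly via $\bra{i}T\ket{i}\ge\max(0,\lambda_i)$, while the paper phrases the same computation as a proof by contradiction.
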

\begin{proof}
    If $A = A_+$ we know that $A$ is positive and the statement is trivial. Assume that $A_+ > A$. We have $A_+\in\mathcal{T}$, because $A_+\ge A$ and $A_+\ge 0$. Hence, it is left to show that there cannot exist a $T \in \mathcal{T}$ s.t. $\tr[T]<\tr[A_+]$. Assume there was such a $T$.
        
    Let $\vec a_1, ..., \vec a_n$ be an orthonormal basis of eigenvectors of $A$ with real eigenvalues $\alpha_1, ..., \alpha_n$ and let it be ordered such that $\alpha_i>0$ for $i\le n_+$ and $\alpha_i\le0$ for $i > n_+$. Since $T\ge A$, for all $i\le n_+$ we have 
    \begin{equation}
        0\le \vec a_i^* \, (T-A) \, \vec a_i = \vec a_i^* \, (T-A_+) \, \vec a_i = \vec a_i^* \, T \, \vec a_i - \alpha_i
        \label{proof_infimum}
    \end{equation}
    Now since $A_+$ is the part of $A$ carrying the positive eigenvalues, we have $A_+\vec a_i = 0$ for all $i\ge n_+$. Therefore,
    \begin{equation}
        \begin{split}
            0 > \tr[T-A_+] &= \sum_{i=1}^{n_+}\vec a_i^* \, (T-A_+) \, \vec a_i + \sum_{i=n_+ + 1}^n \, \vec a_i^* \, (T-A) \, \vec a_i \\
            &= \sum_{i=1}^{n_+} \vec a_i^* \, T \, \vec a_i - \alpha_i + \sum_{i=n_+ + 1}^n \, \vec a_i^* \, T \, \vec a_i \; \ge \; 0 + 0,
        \end{split}        
    \end{equation}
    where the last inequality comes from \eqref{proof_infimum} and the fact that $T\ge0$. This contradiction yields the assertion.
\end{proof}
\vspace{10cm}
\centering{Entanglement hides\\
in circular symmetries\\
lemonade finds.}
\end{document}